\definecolor{darkblue}{rgb}{.0, .0,.6}
\definecolor{darkgreen}{rgb}{.1, .6,.3}
\newcommand{\Appendix}{
  \def\thesection{Appendix~\Alph{section}}
  \def\thesubsection{A.\arabic{subsection}}
}
\newcommand{\sN}{\scriptscriptstyle}
\newcommand{\pkg}[1]{{\normalfont\fontseries{b}\selectfont #1}}
\let\proglang=\textsf
\title{Automated Selection of Post-Strata using a Model-Assisted Regression Tree Estimator}
\author[1]{Kelly S. McConville}
\author[2]{Daniell Toth}
\affil[1]{Department of Mathematics and Statistics, Swarthmore College, Swarthmore, PA, 19081, USA}
\affil[2]{Office of Survey Methods Research, Bureau of Labor Statistics, Washington, DC, 20212, USA}
\runningauthor{McConville and Toth}
\begin{document}
\Sconcordance{concordance:MAT_2017_08_26_k.tex:MAT_2017_08_26_k.Rnw:%
1 92 1 1 0 215 1}
\Sconcordance{concordance:MAT_2017_08_26_k.tex:./Section_Sim_2017_08_26.Rnw:ofs 309:%
1 5 1 1 26 2 1 1 32 17 1 1 3 7 0 1 2 2 1 1 3 7 0 1 2 16 1 2 2 1 1 2 2 1 %
1 2 2 1 1 2 2 14 1}
\Sconcordance{concordance:MAT_2017_08_26_k.tex:MAT_2017_08_26_k.Rnw:ofs 397:%
310 197 1}

\maketitle

\begin{abstract}
Auxiliary information can increase the efficiency of survey estimators through an assisting model when the model captures some of the relationship between the
auxiliary data and the study variables.
Despite their superior properties, model-assisted estimators are rarely used in anything but their simplest form by statistical agencies to produce official statistics.
This is due to the fact that the more complicated models that have been used in model-assisted estimation are
often ill suited to the available auxiliary data.
Under a model-assisted framework, we propose a regression tree estimator for a finite population total.  
Regression tree models are adept at handling the type of auxiliary data usually available in the sampling frame and provide a model that is easy to explain and justify.
The estimator can be viewed as a post-stratification estimator where the post-strata are automatically selected by the recursive partitioning algorithm of the regression tree.
We establish consistency of the regression tree estimator and compare its performance to other survey estimators using the US Bureau of Labor Statistics Occupational Employment Statistics Survey.

\keywords{complex surveys, model-assisted inference, generalized regression estimator, recursive partitioning}
\end{abstract}

 \section{Introduction}

Much of the data used to make informed decisions come from surveys using probability sample designs to select units for data collection.
Indeed, governmental statistical agencies use sample data to produce estimates of official statistics consisting of population means and/or totals that are vital to inform policy makers and the public about the current state of the economy, public health, and environment among others topics.
The quality of these estimates varies between samples and depends on the amount and quality of data provided by the respondents.
This variability may be reduced and the estimator made more efficient if variables in an auxiliary dataset, associated with the study variable, are available for all units in the population.

Consider a finite population total $t_y = \sum_{j \in U_{\sN{}}} y_j$ where $y_j$ represents the value of the study variable for the $j$th unit in the population indexed by the set $U_{\sN{}} = \{1, 2, \ldots, N\}$.  Assume $d$ auxiliary variables, $\bm{x}_j = (x_{j1}, x_{j2}, \ldots, x_{jd})^{\sN{T}},$ are known for each $j \in U_{\sN{}}$.  After a random sample $s_{\sN{}}\subset U_{\sN{}}$ is selected using a complex sample design, with inclusion probabilities $\pi_j = P(j \in s \; | \; \bm{x}_j)$, the total $t_y$ can be estimated using the data from the sampled units.  
A standard approach is to use the Horvitz-Thompson (HT) estimator
\begin{align}\label{ht}
\hat{t}_{y,ht} = \sum_{j \in s_{\sN{}}}{w_j y_j}
\end{align}
\citep{hor52},
where $w_j = \pi_j^{-1}.$
The HT estimator is a design unbiased estimator of $t_y$. 
Under a purely design-based framework, statistical properties, such as bias and variance, are measured with respect to repeated sampling of units from the population and auxiliary data do not impact the form of the estimator but can impact $w_j$ through construction of the sample design.

%

Auxiliary information can be incorporated more directly into the estimation procedure through a model-assisted framework.  In this framework, the estimators are constructed to be asymptotically design unbiased and consistent regardless of model misspecification and are typically more efficient than estimators based solely on the sample data.  
Two common model-assisted approaches are calibration estimators \citep{dev92} and generalized regression estimators \citep{sar92}.  The calibration estimator is a weighted sum of the sampled study variable, similar to equation (\ref{ht}), but the weights, $\{\tilde{w}_j\}_{j \in s}$, are constructed so that they are close to the HT weights, $\{w_j\}_{j \in s}$, and reproduce (i.e. are calibrated to) known totals when applied to auxiliary variables
\begin{align*}
\sum_{j \in s}{\tilde{w}_j \bm{x}_j} = \sum_{j \in U} \bm{x}_j.
\end{align*}

In contrast, the generalized regression estimator employs a model to predict $y_j$ for each $j \in U_{\sN{}}$. The predictions are obtained by assuming the finite population values, $\{y_j, \bm{x}_j\}_{j \in U_{\sN{}}}$, are independent realizations from a superpopulation model, denoted by $\xi$, with mean function $h(\bm{x}_j) = \mbox{E}_{\xi}[Y_j | \textbf{X}_j=\bm{x}_j]$ where $\mbox{E}_{\xi}$ represents the expectation evaluated with respect to $\xi.$  Then, $h(\bm{x})$ is estimated by on the sampled observations.  Denoting an estimator of $h(\bm{x})$ as $\hat{h}_{\sN{n}}(\bm{x})$, the generalized regression estimator for $t_y$ is
\begin{eqnarray}\label{greg}
\widehat{t}_{y, r} = \sum_{j \in s} \frac{y_j - \hat{h}_{n}(\bm{x}_j)}{\pi_{\sN{}j}} + \sum_{j \in U} \hat{h}_{n}(\bm{x}_j).
\end{eqnarray}
The generalized regression estimator under a linear mean function can also be written as a weighted sum of the study variable, $y$:
\begin{eqnarray}\label{reg}
\widehat{t}_{y,l}& =& \sum_{j \in s} \left[1 + (\bm{t}_x - \widehat{\bm{t}}_{x,ht})^{\sN{T}} \left( \sum_{k \in s} \frac{\bm{x}_k \bm{x}_k^{\sN{T}}}{\pi_k} \right)^{-1} \bm{x}_j \right] \frac{1}{\pi_j} y_j
=\sum_{j\in s}\tilde{w}_j y_j
\end{eqnarray}
\citep{sar92}.
where $\bm{t}_x = \sum_{j \in U} \bm{x}_j$ and $\widehat{\bm{t}}_{x,ht}$ are the HT estimators of $\bm{t}_x$.  The sampling weights given by equation (\ref{reg}), $\{\tilde{w}_j\}_{j \in s}$, are calibrated to $\bm{t}_x$, and therefore, under the linear model, the generalized regression estimator is a calibration estimator.

The auxiliary data available for the target population of a survey often include categorical variables, often with a large number of categories.
Complex interactions can exist between variables and the variables are usually not independent. 
This makes it challenging to model the study variable using standard parametric modeling approaches.  For example, the Quarterly Census of Employment and Wages (QCEW) is the sampling frame for many establishment surveys at the US Bureau of Labor Statistics (BLS).
QCEW is compiled from administrative records from the state unemployment insurance programs and contains several categorical variables, such as industry code, whether or not the establishment is part of a multi-establishment firm, ownership-type, the state and region in which the establishment resides, and a size class for the establishment.  

A variety of linear models can be constructed from these categorical frame variables, ranging from an additive model to a model that includes all possible cross classifications. The generalized regression estimator based on an additive, linear model is equivalent to the generalized raking estimator \citep{dev93} and the saturated, linear model results in the post-stratification estimator \citep{smi91}. While the additive, linear model would produce an estimator that is calibrated on each individual category of the auxiliary variables, it would miss potentially informative interactions.  At the other extreme, calibrating on all possible interactions could produce significantly variable weights (sometimes negative weights), and could greatly increase the variance of the estimator. The weights produced by the estimator based on the additive, linear model may also be variable if one of the categorical variables has many categories. We will see that using a large number of variables with a linear model often leads to a poorly fit model, producing an estimator that is less efficient than a purely design-based estimator, such as the HT estimator.

Motivated by the ubiquity of categorical variables and interactions between variables in survey data, \cite{mor63} developed regression trees, which are binary trees based on a recursive partitioning algorithm, as a method for analyzing survey data. Instead of including all categories and/or all interactions, regression tree models group the categories of a variable based on how they relate with the study variable and only include variables and interactions associated with the study variable.



Recursive partitioning algorithms sequentially partition the data into two groups based on an auxiliary variable and estimate the mean in each group.  
The groups are selected by finding the split that provides the greatest reduction in the mean square error at each split.
This recursive splitting allows for complex interactive effects to easily be incorporated into the model. 
For instance, when splitting on a categorical variable, the categories of the variable are divided into the two groups that are most homogeneous with respect to the study variable.
The algorithm will only continue splitting if significant reduction in the mean square error is achieved.  
Therefore, inclusion of a categorical variable does not require a split for each category unlike the linear regression model.
This can substantially reduce the model size while still capturing interactions between categorical variables.

Regression trees have been applied to survey data in a variety of contexts such as analyzing nonresponse \citep{phi12}, nonresponse adjustment, \citep{lohr15}, hot deck imputation \citep{bec12}, and exploratory analysis \citep{ell15}.
We consider finite population estimation and study the behavior of the generalized regression estimator with a regression tree working model.  Other nonparametric models, such as local polynomial regression (\cite{bre00}; \cite{mon05}),  penalized splines (\cite{bre05}; \cite{mcc13}), and neural networks (\cite{mon05}), have also been explored for model-assisted estimation.  Generally, the model-assisted estimators based on a nonparametric model tend to be more efficient than their parametric counterparts when the relationship between the study variable and auxiliary variables is nonlinear.  A survey of these techniques can be found in \cite{bre17}. Despite the development of these more sophisticated model-assisted estimators, they are rarely ever used by statistical agencies to produce official statistics in anything but the simplest calibration framework where a linear model is assumed.  Since a regression tree can be framed as a linear model but overcomes the modeling issues often associated with the auxiliary data available to statistical agencies, we believe it has great potential as an estimation tool for official statistics.


Section \ref{sec:estimator} contains the development of the model assisted estimator using a design consistent regression tree model and shows how the estimator can be framed as a post-stratification estimator where the post-strata are given by the regression tree. This section also contains the statement of the main result that the model-assisted estimator is asymptotically design unbiased and consistent under general conditions on the sample design and population distribution.   Using the BLS Occupational Employment Statistics Survey data, three estimators: the regression tree estimator, the linear regression estimator, which is a generalized regression estimator with a linear working model, and the Horvitz-Thompson estimator, are compared in Section \ref{sec:sim}.  The regression tree estimator dominates the other estimators when much of the variability in the study variable can be explained by key interactions between the auxiliary variables. Section \ref{sec:conclusions} discusses implementation of the estimator and possible extensions.  
Conditions and proof of the main result are contained in the appendix.

\section{Proposed Estimator}\label{sec:estimator}

Many methods have been considered for estimating $h(\bm{x})$.  We propose estimating $h(\bm{x})$ based on a recursive partitioning algorithm applied to $\{y_j, \bm{x}_j\}_{j \in s_{\sN{}}}$. Let $Q_{n} = \{ B_{n1}, B_{n2}, \ldots, B_{n q} \}$ represent the partitioning boxes of the sample based on the algorithm and let $I(\bm{x}_j \in B_{nk}) = 1,$ if $\bm{x}_j \in B_{nk},$ and 0, otherwise.
For $\bm{x}_j \in B_{nk},$ the  estimator of $h(\bm{x}_j)$ is given by
 \begin{equation}\label{hn_tilde}
    \tilde{h}_n(\bm{x}_j)=
   \widetilde{\#}_{\sN{}}(B_{nk})^{-1} \sum_{j \in s_{\sN{}}} \pi_{\sN{}j}^{-1} y_j I (\bm{x}_j \in B_{n k})
\end{equation}
where 
$$\widetilde{\#}_{\sN{}}(B_{nk}) = \displaystyle\sum_{j \in s_{\sN{}}} \pi_{j}^{-1} I(\bm{x}_j \in B_{nk}),$$ 
the Horvitz-Thompson estimator of the population size in box $B_{nk}$.  Notice $\tilde{h}_n(\bm{x}_j)$ is the Hajek (\citeyear{haj71}) estimator of the population mean for box $B_{nk}$. \cite{tot11} show that $\tilde{h}_n(\bm{x})$ is consistent for the superpopulation mean function, $h(\bm{x})$. The necessary regularity conditions for the recursive partitioning algorithm so that the estimator is consistent are given in the Appendix.

The regression tree estimator, $t_{y,t}$ is obtained by inserting equation (\ref{hn_tilde}) into the generalized regression estimator, given in equation (\ref{greg}).
Since $\tilde{h}_n(\bm{x})$ is random with respect to the design, $\widehat{t}_{y,t}$ is not design unbiased for $t_y$, but is asymptotically design unbiased and consistent under general conditions on the sample design and the superpopulation model. The conditions and proofs to establish the following asymptotic result are presented in the Appendix.
\begin{theorem}\label{theo:consist}
Under assumptions \ref{cond1}- \ref{cond5}, 
$N^{-1} \widehat{t}_{y,t}$ is asymptotically design unbiased in the sense that
$$
\lim_{N\to\infty}\mbox{E}_p{N^{-1}\left(\widehat{t}_{y,t} - t_y\right) } = 0,
$$
and design consistent in the sense that
$$
\lim_{N\to\infty}\Pr\left\{ N^{-1}\left|\widehat{t}_{y,t} - t_y\right| >  \eta\right\} = 0
$$
for all $\eta > 0$.
\end{theorem}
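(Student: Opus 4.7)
The plan is to exploit the fact that the Hajek leaf estimator in (\ref{hn_tilde}) is constant on each leaf $B_{nk}$ of the partition $Q_n$; call this constant value $\tilde m_{nk}$. By the definition of $\tilde m_{nk}$ as a Hajek mean, within any fixed leaf the weighted residuals satisfy the orthogonality identity
$$
\sum_{j\in s,\,\bm{x}_j\in B_{nk}}\pi_j^{-1}\bigl(y_j-\tilde m_{nk}\bigr)\;=\;0,
$$
so the first summation in the generalized regression form (\ref{greg}) of $\widehat{t}_{y,t}$ vanishes leaf by leaf. The second summation then reduces to $\sum_k \#(B_{nk})\,\tilde m_{nk}$, producing the post-stratification representation $\widehat{t}_{y,t}=\sum_k \#(B_{nk})\tilde m_{nk}$ foreshadowed in the introduction. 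Subtracting the analogous decomposition $t_y=\sum_k \#(B_{nk})\bar y_{B_{nk}}$, where $\bar y_{B_{nk}}$ is the true population mean of $y$ in leaf $B_{nk}$, yields the leaf-wise error decomposition
$$
N^{-1}\bigl(\widehat{t}_{y,t}-t_y\bigr)\;=\;\sum_{k=1}^{q}\frac{\#(B_{nk})}{N}\bigl(\tilde m_{nk}-\bar y_{B_{nk}}\bigr),
$$
which is bounded in absolute value by $\max_k|\tilde m_{nk}-\bar y_{B_{nk}}|$ since the leaf counts sum to $N$.

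Second, I would invoke the Hajek-mean consistency of \cite{tot11}, which under assumptions \ref{cond1}--\ref{cond5} on the splitting rule, the minimum node size, the moments of $y$, and the inclusion probabilities, shows that $\tilde m_{nk}-\bar y_{B_{nk}}\to 0$ in each leaf. Combined with uniform control on the number of leaves and on $\pi_j^{-1}$ from the same assumptions, one obtains $\max_k|\tilde m_{nk}-\bar y_{B_{nk}}|\to 0$ in probability, which delivers design consistency. For asymptotic design unbiasedness I would take the design expectation of the same decomposition and finish with a dominated-convergence argument: the leaf-wise error is uniformly dominated in $L^1$ by the moment bounds on $y$, and its expectation vanishes by the $L^1$ version of the Toth--Eltinge convergence applied conditionally on $Q_n$.

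The main obstacle is that the partition $Q_n$ is itself data-driven, so the post-strata indicators $I(\bm{x}_j\in B_{nk})$ depend on the sample. My approach is to first condition on $Q_n$, which reduces each summand to a classical Hajek stratum-mean residual for a fixed post-stratification, and then to appeal to the minimum-node-size and split-rule regularity in \ref{cond1}--\ref{cond5} to ensure the conditional Hajek-mean error is uniformly small across the random collection of leaves. Once this uniform-over-leaves step is in place, both limit statements in the theorem follow by summing at most $q$ terms with weights $\#(B_{nk})/N\le 1$.
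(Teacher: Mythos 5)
Your opening reduction is correct, and it is in fact the exact post-stratification identity underlying the paper's Section 2.1: because $\tilde h_n$ in equation (\ref{hn_tilde}) is the Hajek mean on each leaf, the weighted sample residuals sum to zero leaf by leaf, the first sum in (\ref{greg}) vanishes, and one obtains $N^{-1}(\widehat t_{y,t}-t_y)=\sum_{k=1}^q N^{-1}\#(B_{nk})(\tilde\mu_{nk}-\tilde\mu_{Nk})$, where $\tilde\mu_{Nk}$ is the population mean of $y$ on leaf $B_{nk}$. The gap is in what you do next. Bounding this sum by $\max_k|\tilde\mu_{nk}-\tilde\mu_{Nk}|$ discards the cancellation that actually drives the result, and the maximum over the random collection of leaves cannot be controlled by the moment bounds the assumptions supply. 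The available ingredient (Proposition 1 of Toth and Eltinge, as invoked in the paper) is a per-leaf second-moment bound $\mbox{E}_p[(\tilde\mu_{Nk}-\tilde\mu_{nk})^2]=O(\gamma(n)n^{1/2}k(n)^{-1})$, while the number of leaves is only bounded by $q=O(n/k(n))$, which grows. A union/Chebyshev bound then gives $\Pr\{\max_k|\tilde\mu_{nk}-\tilde\mu_{Nk}|>\epsilon\}\le O(\gamma(n)n^{3/2}k(n)^{-2})\epsilon^{-2}$, and this need not vanish under the rate conditions \ref{rate1}--\ref{rate3}: take $\gamma(n)=\log n$ and $k(n)=n^{3/4}$, which satisfy all three, yet $\gamma(n)n^{3/2}k(n)^{-2}=\log n\to\infty$. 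Nothing in \ref{cond1}--\ref{cond5} provides the exponential tail control needed to beat the union bound, so your claim that uniform control on the number of leaves yields $\max_k|\tilde\mu_{nk}-\tilde\mu_{Nk}|\to 0$ is an assertion rather than a proof, and it is the step that fails.

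The paper's proof keeps exactly the structure your bound throws away. It adds and subtracts the population leaf-mean function $\tilde h_N$ built on the sample partition, so that each leaf error $\tilde\mu_{Nk}-\tilde\mu_{nk}$ appears multiplied not by $\#(B_{nk})/N$ but by the Horvitz--Thompson estimation error of the leaf count, $N^{-1}\sum_{i\in U_N}I(\bm{x}_i\in B_{nk})(I_i/\pi_i-1)$, which contributes an extra $O(n^{-1/2})$; Cauchy--Schwarz across leaves then gives a bound of order $(n^{3/2}\gamma(n)k(n)^{-2})^{1/2}\cdot n^{-1/2}=(n^{1/2}\gamma(n)k(n)^{-2})^{1/2}$, which does vanish under \ref{rate3}. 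The remaining term, involving the within-leaf residuals $y_i-\tilde h_N(\bm{x}_i)$, is handled by conditioning on $Q_n$ and invoking condition \ref{cond4} to control the perturbations $d_i$, $d_{ij}$ of the conditional inclusion moments --- a point your sketch also glosses over when you describe the conditional problem as ``a classical Hajek stratum-mean residual for a fixed post-stratification'': given the data-driven $Q_n$, the indicators $I_i$ do not have moments $\pi_i$, $\pi_{ij}$, which is precisely why \ref{cond4} exists. To repair your argument you would need either to retain the $O(n^{-1/2})$ factor (which essentially reproduces the paper's $B_N$ term) or to impose substantially stronger minimum-leaf-size or tail conditions so that the maximum over leaves actually converges.
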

The proof of the above result follows the method used in \cite{tot11}, but because of the dependence between the splits in a regression tree and the study variable, consistency of the model assisted estimator using a regression tree model does not follow directly from the consistency of the regression tree model shown in that paper.
As a consequence, the proofs require adjusted rates, including a different sampling fraction, to those given in \cite{tot11}.

\subsection{Post-Stratification}

From equation (\ref{hn_tilde}) we see that for a given partition, $Q_n$, the mean estimator, $\tilde{h}_n(\bm{x})$, can be written as a linear regression estimator with $q$ indicator function covariates
\begin{align*}
\tilde{h}_n(\bm{x}) = \tilde{\mu}_{n1}  I(\bm{x} \in B_{n1}) + \tilde{\mu}_{n2} I(\bm{x} \in B_{n2}) + \ldots + \tilde{\mu}_{nq}  I(\bm{x} \in B_{nq})
\end{align*}
where
\begin{align*}
\tilde{\mu}_{nk} = \left[\widetilde{\#}\left(B_{nk}\right) \right]^{-1} \sum_{j \in s} \pi_j^{-1} y_j I(\bm{x}_j \in B_{nk}).
\end{align*}
Therefore, the regression tree estimator is also a post-stratification estimator where each box, $B_{nk}$, represents a post-strata.  This also implies that the estimator is calibrated to the population total of each box. From this perspective, the tree provides a data-driven mechanism for selecting the post-stratification where none of the resulting post-strata are empty.


\section{Simulation Study using Occupational Employment Statistics Survey Data}\label{sec:sim}



The BLS Occupational Employment Statistics survey (OES) is a semi-annual establishment survey collecting occupational employment and wage data for workers in non-farm industries. 
The survey data are used to publish annual occupational employment and wage estimates for over 800 occupations at the metropolitan and micropolitan statistical area (MSA) level and for specific industries at the national level.
These estimates are used by the public and policy-makers to identify occupations that are growing or shrinking in the economy and to make labor projection forecasts.

Using a sample design that is stratified by area and industry, a sample of about 200,000 establishments are selected each May and November where establishments are selected with probability proportional to size within each stratum.
Sample weights of the selected sample units are adjusted so that the weighted employment totals are calibrated to QCEW employment totals. 
See \cite{bls08} for more details on the sample design of the OES.


The frame of establishments used by the OES is derived from the QCEW, and contains the size class of the MSA where the establishment is located, whether or not the establishment is part of a multi-establishment company, and the establishment's industry and size-class, for each establishment in the population.
Since the number of employees at an establishment with a specific occupation code varies between types of establishment, a model-assisted estimator which accounts for the establishment characteristics should be more efficient than an purely design-based estimator.  However, usually there are only a few establishment-level variables or categories related to occupational employment and how these variables relate to employment numbers depends on the occupation. Therefore, the model and auxiliary variables used should vary for each occupation employment estimator.

Consider the regression tree model shown in Figure \ref{tree:bar}, relating the number of bartenders per establishment to the establishment characteristic variables available in the QCEW data.
The first (and by far the most important) split identifies establishments with industry codes 71 (Arts, Entertainment, and Recreation) which include casinos, ski-resorts and amusement parks, and 72 (Accommodation and Food Services) which includes bars and restaurants, as having a higher average number of employees with the occupation of bartender than establishments in other industries.  This is not surprising for this occupation, but you would not expect the same split for other occupations.

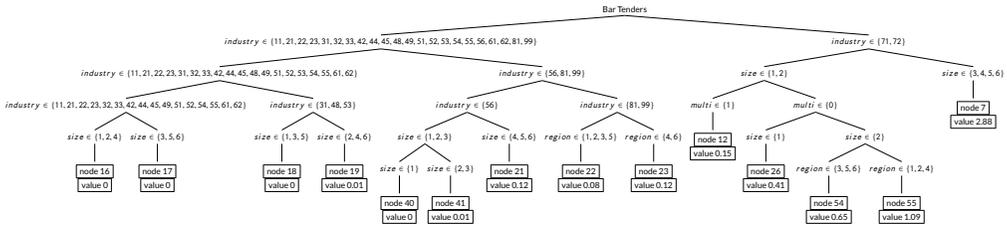
\begin{figure}[htb] 
\centering 
\begin{tikzpicture}[scale=0.4, ] 
\tikzset{every tree node/.style={align=center,anchor=north}} 
\Tree [.{Bar Tenders} [.{$industry \in \{11,21,22,23,31,32,33,42,44,45,48,49,51,52,53,54,55,56,61,62,81,99\}$} [.{$industry \in \{11,21,22,23,31,32,33,42,44,45,48,49,51,52,53,54,55,61,62\}$} [.{$industry \in \{11,21,22,23,32,33,42,44,45,49,51,52,54,55,61,62\}$} [.{$size \in \{1,2,4\}$} {\fbox{node 16} \\ \fbox{value 0}} ][.{$size \in \{3,5,6\}$} {\fbox{node 17} \\ \fbox{value 0}} ]][.{$industry \in \{31,48,53\}$} [.{$size \in \{1,3,5\}$} {\fbox{node 18} \\ \fbox{value 0}} ][.{$size \in \{2,4,6\}$} {\fbox{node 19} \\ \fbox{value 0.01}} ]]][.{$industry \in \{56,81,99\}$} [.{$industry \in \{56\}$} [.{$size \in \{1,2,3\}$} [.{$size \in \{1\}$} {\fbox{node 40} \\ \fbox{value 0}} ][.{$size \in \{2,3\}$} {\fbox{node 41} \\ \fbox{value 0.01}} ]][.{$size \in \{4,5,6\}$} {\fbox{node 21} \\ \fbox{value 0.12}} ]][.{$industry \in \{81,99\}$} [.{$region \in \{1,2,3,5\}$} {\fbox{node 22} \\ \fbox{value 0.08}} ][.{$region \in \{4,6\}$} {\fbox{node 23} \\ \fbox{value 0.12}} ]]]][.{$industry \in \{71,72\}$} [.{$size \in \{1,2\}$} [.{$multi \in \{1\}$} {\fbox{node 12} \\ \fbox{value 0.15}} ][.{$multi \in \{0\}$} [.{$size \in \{1\}$} {\fbox{node 26} \\ \fbox{value 0.41}} ][.{$size \in \{2\}$} [.{$region \in \{3,5,6\}$} {\fbox{node 54} \\ \fbox{value 0.65}} ][.{$region \in \{1,2,4\}$} {\fbox{node 55} \\ \fbox{value 1.09}} ]]]][.{$size \in \{3,4,5,6\}$} {\fbox{node 7} \\ \fbox{value 2.88}} ]]]\end{tikzpicture} 

\caption{\small Tree model on full OES data set for predicting the number of bartenders per establishment. The survey-weighted average number of bartenders is given at each end node.} 
\label{tree:bar} 
\end{figure} 
Figure \ref{tree:eled} shows the tree model for elementary school teachers. The first split separates industry code 61 (Educational Services), which includes schools, from all other industries.  Again, this is as expected, as is the fact that larger schools have a higher number of elementary school teachers than smaller schools.  

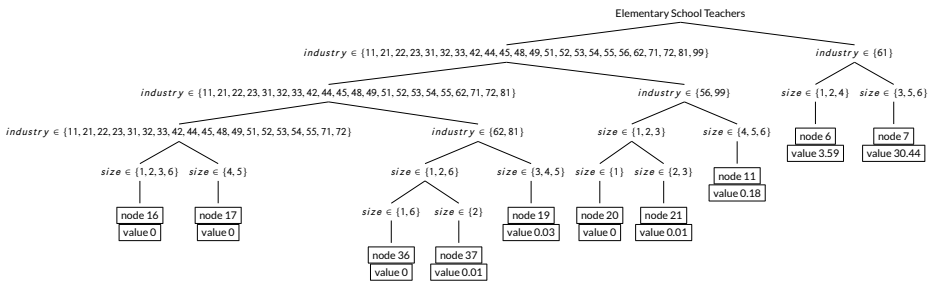
\begin{figure}[htb] 
\centering 
\begin{tikzpicture}[scale=0.5, ] 
\tikzset{every tree node/.style={align=center,anchor=north}} 
\Tree [.{Elementary School Teachers} [.{$industry \in \{11,21,22,23,31,32,33,42,44,45,48,49,51,52,53,54,55,56,62,71,72,81,99\}$} [.{$industry \in \{11,21,22,23,31,32,33,42,44,45,48,49,51,52,53,54,55,62,71,72,81\}$} [.{$industry \in \{11,21,22,23,31,32,33,42,44,45,48,49,51,52,53,54,55,71,72\}$} [.{$size \in \{1,2,3,6\}$} {\fbox{node 16} \\ \fbox{value 0}} ][.{$size \in \{4,5\}$} {\fbox{node 17} \\ \fbox{value 0}} ]][.{$industry \in \{62,81\}$} [.{$size \in \{1,2,6\}$} [.{$size \in \{1,6\}$} {\fbox{node 36} \\ \fbox{value 0}} ][.{$size \in \{2\}$} {\fbox{node 37} \\ \fbox{value 0.01}} ]][.{$size \in \{3,4,5\}$} {\fbox{node 19} \\ \fbox{value 0.03}} ]]][.{$industry \in \{56,99\}$} [.{$size \in \{1,2,3\}$} [.{$size \in \{1\}$} {\fbox{node 20} \\ \fbox{value 0}} ][.{$size \in \{2,3\}$} {\fbox{node 21} \\ \fbox{value 0.01}} ]][.{$size \in \{4,5,6\}$} {\fbox{node 11} \\ \fbox{value 0.18}} ]]][.{$industry \in \{61\}$} [.{$size \in \{1,2,4\}$} {\fbox{node 6} \\ \fbox{value 3.59}} ][.{$size \in \{3,5,6\}$} {\fbox{node 7} \\ \fbox{value 30.44}} ]]]\end{tikzpicture} 

\caption{\small Tree model on full OES data set for predicting the number of elementary school teachers (excluding special education) per establishment. The survey-weighted average number of elementary school teachers is given at each end node.} 
\label{tree:eled} 
\end{figure} 
While bartenders and elementary school teachers are predominately found in different industry sectors, they are both service occupations, so they are both found, to a much lesser extent, in some of the same service sector industry codes.
Both are found, for example, in industry code 56 (Administrative and Support Services) and industry code 81 (Other Services). Industry code 56 includes convention centers which employ bartenders and also includes correctional and support facilities which  occasionally  employ elementary school teachers, while industry code 81 includes social clubs with bartenders and daycare centers which sometimes have elementary school teachers.
Most of the following splits are on size of the establishments, where again (not surprisingly) establishments with more employees are expected to have more employees with the respective occupational code than establishments with fewer employees in the same industry group. 

Treating the OES data as the target population, we conducted a simulation study to compare the performance of the regression tree estimator to a linear regression estimator and to a Horvitz-Thompson (HT) estimator.  For each estimator, we estimated employment totals for a variety of occupations. 
The test was done over 1000 repeated samples of size $n=\{2000, 3000, 4000, 5000, 6000\}$ from the 187,115 establishments in the OES dataset using a stratified design where establishments in larger size classes were sampled with higher probability than establishments in smaller size classes, resulting in an unequal probability sample design.

Since the regression tree model conducts variable selection at each split, it provides an automated method for choosing which variable and category combinations form the post-strata on which the estimator is calibrated.  For a comparable contender, we also automated the calibration selection for the linear model by using a forward step-wise procedure to select variables for inclusion in the model.  However, since the two variable selection procedures differ, the resulting calibrations also differ. The stepwise linear estimator is calibrated to all categories of the variables selected by the procedure while the regression tree estimator is calibrated to the sub-groups represented by the tree's resulting partition.  For example, an regression tree estimator based on the tree in Figure \ref{tree:eled} would be calibrated to the total number of establishments in each of the 10 sub-groups defined by each end-node.
A linear regression estimator with the same two variables (\textit{industry, size}) would be calibrated to the totals of each of the 30 categories, but no interactions between categories.  We also considered a stepwise linear estimator based on all possible interactions but the estimator performed very poorly and therefore is not included in the results.

For each sample, the models were fit to the sample data using weights equal to the inverse of their inclusion probabilities. We used the forward-stepwise variable selection procedure in \proglang{R} \citep{R} for the linear regression models and the \proglang{R} package \pkg{rpms} \citep{rpms} to build the regression trees.  An estimate of the total employment for each occupation tested was produced for each estimator.

\begin{figure}[htb]
\begin{tabular}{c}
\vspace{-.8in}
\includegraphics{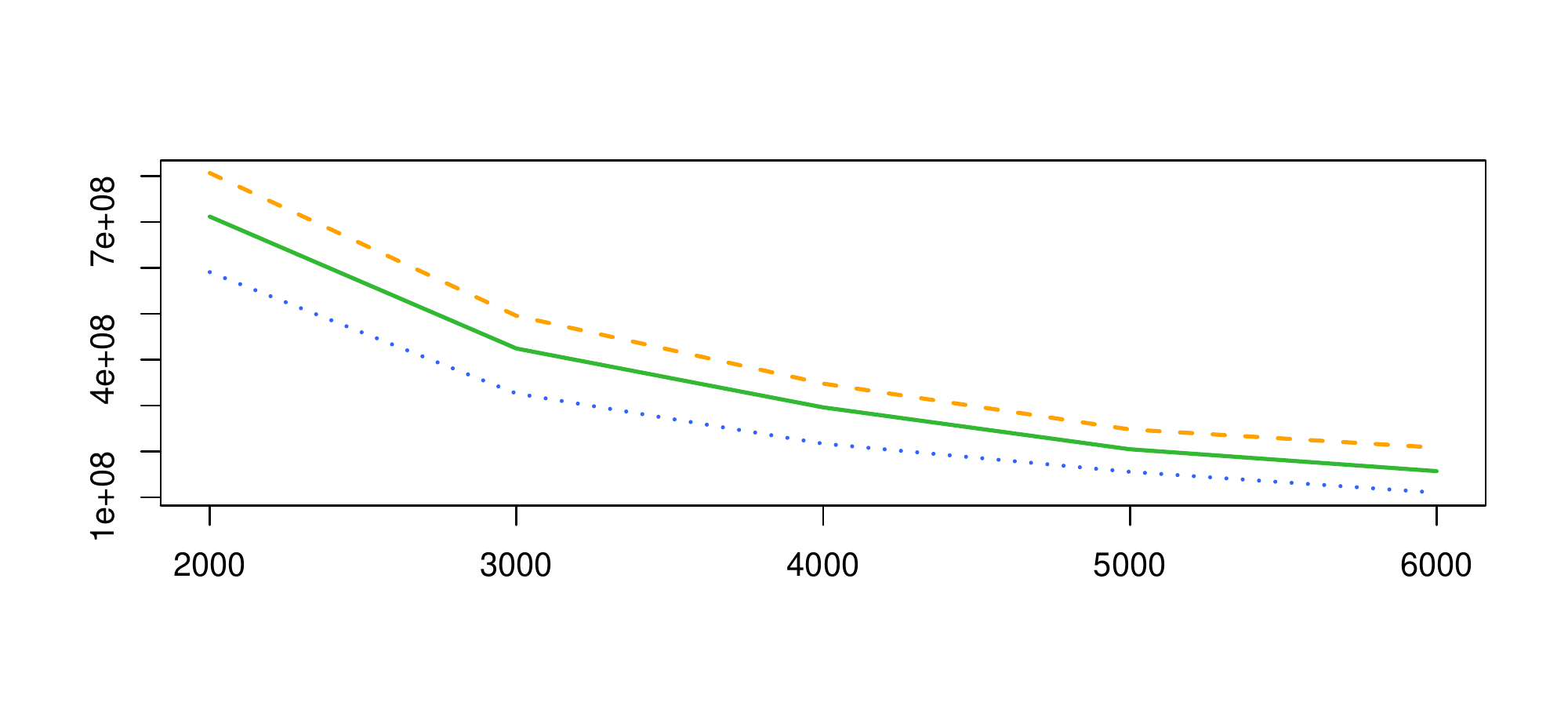}
\\
\vspace{-.8in}
\includegraphics{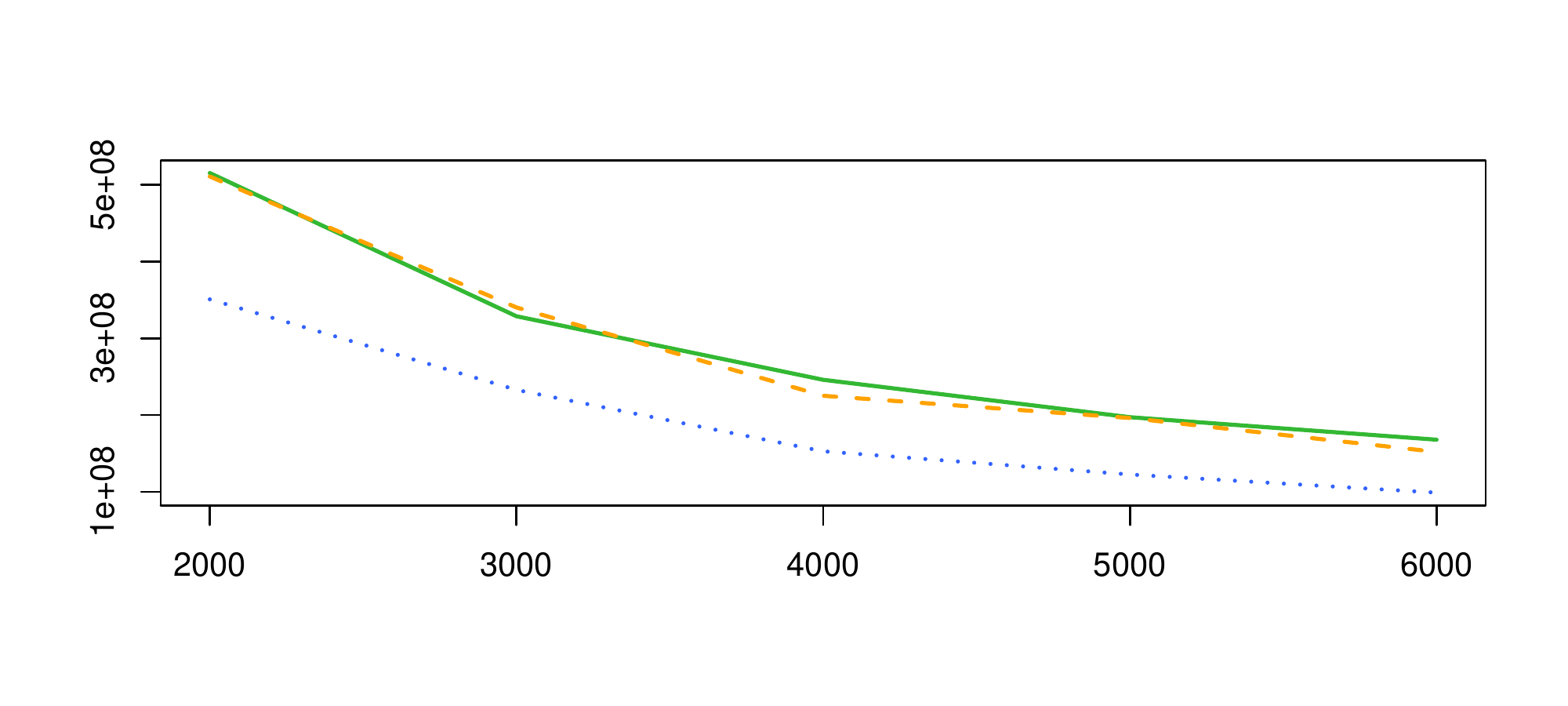}
\\
\vspace{-.8in}
\includegraphics{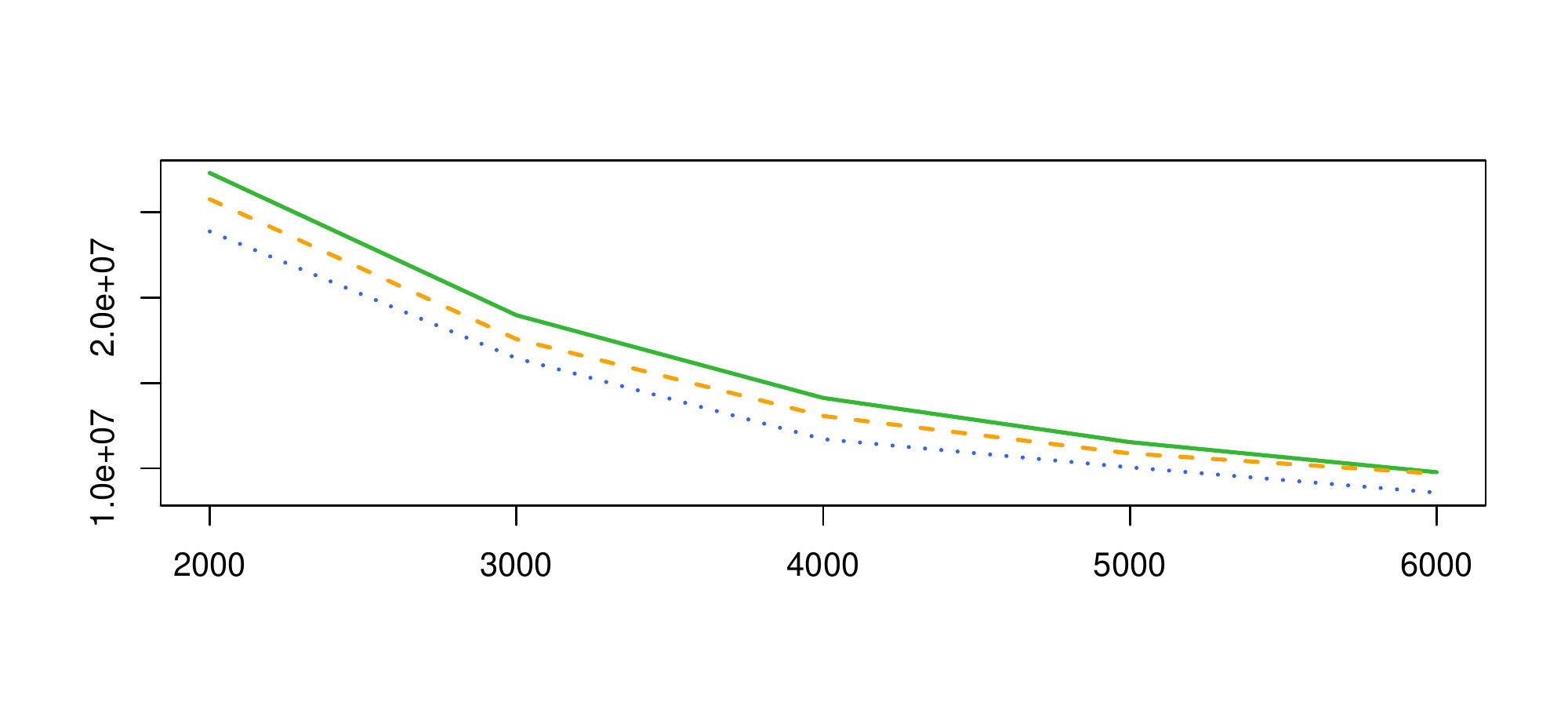}
\\
\vspace{-.3in}
\includegraphics{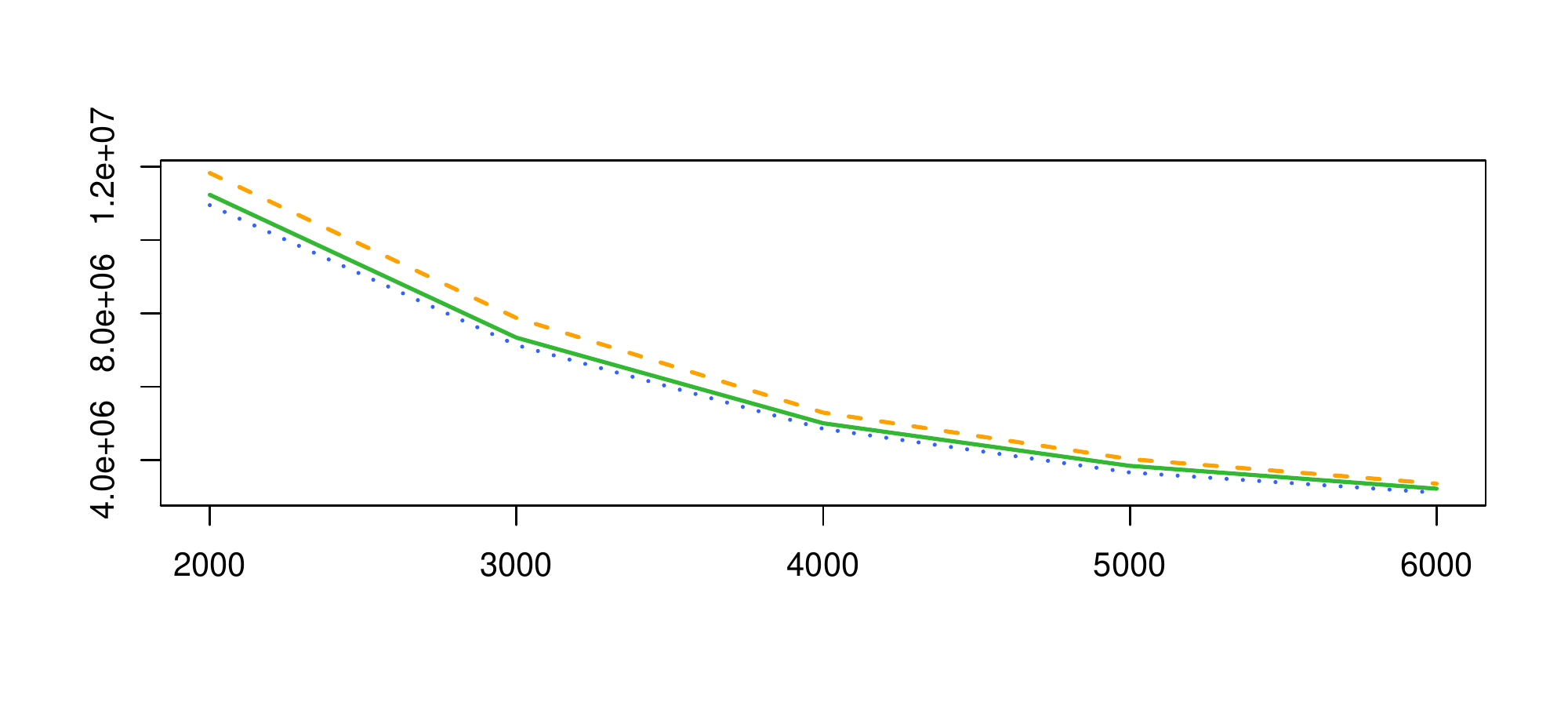}
\\
\end{tabular}
\caption{Simulation results comparing the Horvitz-Thompson estimator (green solid-line), linear regression estimator (orange dashed-line) and the regression tree estimator (blue dotted-line) of employment total for the occupations: (top to bottom) elementary school teachers, waiters and waitresses, bartenders, and sales-managers.  The graphs display the mean squared error of the estimates for the 1,000 repeated samples by sample size.}\label{fig:mse}
\end{figure}

The empirical mean squared error (MSE) of the 1000 total estimates by sample size for each of the three estimators is shown in Figure \ref{fig:mse} for four of the occupations that we tested: elementary school teachers, waiters and waitresses, bartenders, and sales-managers. The top two graphs in Figure \ref{fig:mse} contain the test results for elementary school teachers and for waitstaff, respectively. Both graphs show a large increase in efficiency using the regression tree estimator (dotted-line) compared to using the linear regression estimator (dashed-line) and the HT estimator (solid line).
This is in contrast to the results for bartender and sales-manager occupation codes, which are shown in the bottom two graphs in Figure \ref{fig:mse}, respectively.
These graphs show no real gain in MSE using either of the model-assisted estimators compared to the HT estimator.
It is worth noting that the top two occupation codes are large occupation classes with a total employment in our OES finite population of 205,352 for elementary school teachers and 213,206 for waitstaff, while the occupation codes bartender and sales-manager have much smaller total employment of 35,811 and 34,585 respectively.

While for all four occupation codes the regression tree estimator is either more efficient or has about the same efficiency as the HT estimator, the linear regression estimator is never more efficient than the HT estimate and for elementary school teachers and sales-managers, it is less efficient.
The results for these four occupation codes are consistent with the results obtained for all the occupation codes tested, in that, the regression tree estimator was either more efficient or of similar efficiency as the HT estimator with efficiency gains occurring in larger occupations, while the linear regression estimator had either similar efficiency as the HT estimator or was occasionally much less efficient.


\section{Conclusions}\label{sec:conclusions}

In this article we have presented the regression tree estimator for a finite population total and have developed the design consistency of the estimator under standard assumptions on the sample design and finite population.  Following a model-assisted framework, the estimator is a generalized regression estimator where the working model is a regression tree. An implementation of the estimator will be available in an \proglang{R} package 
on The Comprehensive R Archive Network (CRAN).

The regression tree estimator is also a post-stratification estimator since a regression tree can be written as a linear model where each variable is an indicator function for the sequential splits to an end node of the tree. Therefore, the regression tree can be viewed as a data-driven technique for choosing the appropriate set of post-strata.  These post-strata have the ability to capture complex interactions between the variables and as shown in the simulations, they can increase the efficiency of the model-assisted estimator.  Additionally, the estimator is calibrated to the population totals of each post-strata. 

Since the auxiliary data available for establishment surveys are often mostly categorical, we have focused on useful features of how regression trees handle categorical data, such as, collapsing categories into homogeneous sub-groups and capturing predictive interactions between specific categories. However, regression trees are also useful for quantitative auxiliary data, especially if the data have a non-linear relationship with the study variable.  

There are a couple of extensions of the regression tree estimator to be considered in future work. If the study variable relates linearly with an auxiliary variable, then a simple linear regression model instead of a simple mean model could be fit to each of the end nodes.  Another option would be to use a random forest \citep{bre01}, which is a collection of regression trees fit on subsets of the variables and/or data, in the generalized regression estimator since a random forest tends to have greater predictive power than a single regression tree.


\section*{Appendix: Assumptions and Proof of Theorem \ref{theo:consist}}\label{sec:results}
\Appendix
\setcounter{section}{1}

\textbf{Rate conditions}: Define two functions of $N_{}$ that will be used as rates of convergence.
Let $\gamma(N_{ })$ and $k(N_{ })$ be given functions
bounded above 0 for all $N_{ }>0$ satisfying:

\renewcommand{\theenumi}{R\arabic{enumi}}
\begin{enumerate}
 \item : $\gamma(N_{ })  \rightarrow  \infty $ \label{rate1}
  \item : $N_{ }^{-1}k(N_{ })  \rightarrow  0$ \label{rate2}
  \item : $k(N_{ })^{-1}\gamma(N_{ })N^{1/2}_{ } \rightarrow  0.$ \label{rate3}
\end{enumerate}


Assume we have a sequence of nested populations, $U_1 \subset U_2 \subset \cdots \subset U_{\sN{}} \subset \cdots$. Let the sample, $s_{\sN{N}}\subset U_{\sN{N}}$, be selected according to a sample design $\rm{p}_{\sN{N}}(\cdot)$ with sample size $n_{\sN{N}}$. Denote the  first-order inclusion probabilities by $\pi_{\sN{N} j}=\Pr\{j,\in s_{\sN{N}}\}=\sum_{s_{\sN{N}} \subset U_{\sN{N}} :j\in s_{\sN{N}}}\rm{p}_{\sN{N}}(s_{\sN{N}})$ and the second-order inclusion probabilities by $\pi_{\sN{N} jk}=\Pr\{j,k\in s_{\sN{N}}\}=\sum_{s_{\sN{N}} \subset U_{\sN{N}} :j,k\in s_{\sN{N}}}\rm{p}_{\sN{N}}(s_{\sN{N}})$.  Let $I_{\sN{N}j}$ be an indicator function representing whether unit $j$ is or is not in the sample.  Let $\mbox{E}_p$ and $\mbox{Var}_p$ denote the expectation and the variance evaluated with respect to the sample design, $\rm{p}_{\sN{}}(\cdot)$ We suppress the subscript $N$ in $n$ as well as in, $I_j$, $\pi_j$ and $\pi_{jk}$  for simplicity of notation. Assume:

\renewcommand{\theenumi}{A\arabic{enumi}}
\begin{enumerate}

\item  For any set $A \in U_{N}$, $\left( \sum_{i \in U_N}I(i \in A)\right) ^{-1} \sum_{i \in U_N}
  y_i^2 I(i \in A) \leq M < \infty$ with $\xi$-probability 1. \label{cond1}


      \item $ \limsup_{ N \rightarrow \infty}
	   \big(N_{ } \min_{i \in U_{N}} \pi_{  i}\big)^{-1} = O(n^{-1}_{ })
	        \hspace{.3 cm} \mbox{with $\xi$-probability 1.}$  \label{pimin}


  \item $\limsup_{ N \rightarrow \infty} \max_{i, j \in U_{N} :i \neq j}
   \left| \pi_{  ij}\pi_{  i}^{-1}\pi_{  j}^{-1} - 1 \right| = O\big(n_{ }^{-1}\big)
        \hspace{.3 cm} \mbox{with $\xi$-probability 1.}$ \label{cond3}

    \item For $i \in U_N$, $\mbox{E}_p\Big[I_{  i} \;|\;Q_{n_{ }} \Big] =
          \pi_{  i } + d_i$ where $\mbox{E}_p \left(\max_{i \in U_N} |d_i | \right) = O(k(n)^{-1})$ and for $i, j \in U_{\sN{N}}: i \neq j$, $\mbox{E}_p\Big[I_{  i} I_{  j} \;|\;Q_{n_{ }} \Big] =   \pi_{  i j} + d_{ij}$ where $\mbox{E}_p \left( \max_{i, j \in U_{N} :i \neq j} | d_{ij}| \right)= O( \gamma(n)^2 k(n)^{-2}).$ \label{cond4}

%
%

\item The sampling fraction is given by $f = n N^{-1}.$  Let $f^{-1} = O(\gamma(n)^{-1} n^{1/2})$ and $f = O(\gamma(n) n^{-1/2})$.\label{cond5}
\end{enumerate}

%

\begin{proof}[Proof of Theorem \ref{theo:consist}]

By Markov's Inequality, we can achieve asymptotic design unbiasedness and design consistency by showing that
\begin{align*}
\underset{N \rightarrow \infty}{\lim} \text{ } \mbox{E}_p \left|\frac{\widehat{t}_{y,t} - t_y}{N} \right| = 0.
\end{align*}

Without loss of generality, assume the support of $\bm{X}_j$ is $(0,1)^{d}$. 
First, we define a population mean estimator of $h(\bm{x})$ using the sample-based partition, $Q_n$,
\begin{align*}
\tilde{h}_{\sN{N}}(\bm{x}) = \tilde{\mu}_{\sN{N}1}  I(\bm{x} \in B_{n1}) + \tilde{\mu}_{\sN{N}2} I(\bm{x} \in B_{n2}) + \ldots + \tilde{\mu}_{\sN{N}q}  I(\bm{x} \in B_{nq})
\end{align*}
where 
\begin{align*}
\tilde{\mu}_{\sN{N}k} = \left[{\#}\left(B_{nk}\right) \right]^{-1} \sum_{i \in U_{\sN{N}}} y_i I(\bm{x} \in B_{nk}).
\end{align*}

Proposition 1 of \cite{tot11} prove the mean square error consistency of the sample estimator, $\tilde{h}_n(\bm{x}),$  to the superpopulation mean, $h(\bm{x})$.  Within the proof of the result, mean square error consistency of the sample estimator to the finite population estimator using the sample-based partition is provided.  In particular, under assumptions  \ref{cond1}  -\ref{cond5}, Proposition 1 of \cite{tot11} implies that $\mbox{E}_p\left[( \tilde{\mu}_{\sN{N}k} - \tilde{\mu}_{nk} )^2 \right] = \gamma(n) n^{1/2}k(n)^{-1}$.


\noindent By adding and subtracting $$ \sum_{i\in U_{\sN{N}}} \frac{\tilde{h}_{\sN{N}}(\bm{x}_i) }{N}\left(\frac{I_i}{\pi_i}-1\right)$$ and applying the Triangle Inequality, we get
\begin{align*}
\mbox{E}_p \left|\frac{\widehat{t}_{y,t} - t_y}{N} \right| &=\mbox{E}_p \left|\sum_{i\in U_{\sN{N}}} \frac{\left(y_i- \tilde{h}_{\sN{N}}(\bm{x}_i)\right)}{N}\left(\frac{I_i}{\pi_i}-1\right)
+ \sum_{i\in U_{\sN{N}}} \frac{\left( \tilde{h}_{\sN{N}}(\bm{x}_i) - \tilde{h}_n(\bm{x}_i)\right)}{N}\left(\frac{I_i}{\pi_i}-1\right) \right|\\
&\leq \mbox{E}_p \left|\sum_{i\in U_{\sN{N}}} \frac{ \left(y_i- \tilde{h}_{\sN{N}}(\bm{x}_i)\right) }{N}\left(\frac{I_i}{\pi_i}-1\right) \right|
+ \mbox{E}_p \left| \sum_{i\in U_{\sN{N}}} \frac{\left( \tilde{h}_{\sN{N}}(\bm{x}_i) -
  \tilde{h}_n(\bm{x}_i)\right)}{N}\left(\frac{I_i}{\pi_i}-1\right)
  \right|\\
& := A_{\sN{N}} + B_{\sN{N}}.
\end{align*}

\noindent For the first term, we have
\begin{align*}
A_{\sN{N}} & \leq	\mbox{E}_p\left[ \underset{i, j \in U_{\sN{N}}}{\sum \sum} \frac{ \left(y_i- \tilde{h}_{\sN{N}}(\bm{x}_i)\right) }{N}\frac{ \left(y_j- \tilde{h}_{\sN{N}}(\bm{x}_j)\right) }{N}\left(\frac{I_i}{\pi_i}-1\right) \left(\frac{I_j}{\pi_j}-1\right) \right] \\
& = \mbox{E}_p\left[ \underset{i \in U_{\sN{N}}}{\sum } \frac{ \left(y_i- \tilde{h}_{\sN{N}}(\bm{x}_i)\right)^2 }{N^2}\left(\frac{I_i}{\pi_i}-1\right)^2 \right] \\
&+ \mbox{E}_p\left[ \underset{i\neq j: i, j \in U_{\sN{N}}}{\sum \sum} \frac{ \left(y_i- \tilde{h}_{\sN{N}}(\bm{x}_i)\right) }{N}\frac{ \left(y_j- \tilde{h}_{\sN{N}}(\bm{x}_j)\right) }{N}\left(\frac{I_i}{\pi_i}-1\right) \left(\frac{I_j}{\pi_j}-1\right) \right] \\
& := A_{\sN{N}1} + A_{\sN{N}2}.
\end{align*}
Applying the law of total expectation to the first term gives
\begin{align*}
A_{\sN{N}1} &= \mbox{E}_p \left\{ \mbox{E}_p\left[ \underset{i \in U_{\sN{N}}}{\sum } \frac{ \left(y_i- \tilde{h}_{\sN{N}}(\bm{x}_i)\right)^2 }{N^2}\left(\frac{I_i}{\pi_i}-1\right)^2 \Big|Q_n \right] \right\}\\
&= \mbox{E}_p \left\{  \underset{i \in U_{\sN{N}}}{\sum } \frac{ \left(y_i- \tilde{h}_{\sN{N}}(\bm{x}_i)\right)^2 }{N^2}\mbox{E}_p\left[\left(\frac{I_i}{\pi_i}-1\right)^2 \Big|Q_n \right] \right\}\\
&= \mbox{E}_p \left\{  \underset{i \in U_{\sN{N}}}{\sum } \frac{ \left(y_i- \tilde{h}_{\sN{N}}(\bm{x}_i)\right)^2 }{N^2} \left[\frac{(1-\pi_i)}{\pi_i} + \frac{d_i(1-2\pi_i)}{\pi_i^2} \right] \right\}\\
&\leq \mbox{E}_p \left\{  \underset{i \in U_{\sN{N}}}{\sum } \frac{ \left(y_i- \tilde{h}_{\sN{N}}(\bm{x}_i)\right)^2 }{N^2} \left|\frac{(1-\pi_i)}{\pi_i} \right| \right\}  +\mbox{E}_p \left\{  \underset{i \in U_{\sN{N}}}{\sum } \frac{ \left(y_i- \tilde{h}_{\sN{N}}(\bm{x}_i)\right)^2 }{N^2} \left| \frac{d_i(1-2\pi_i)}{\pi_i^2} \right| \right\}\\
&\leq \frac{1}{N \min_{i \in U_N} \pi_i} 4M^2 +  \mbox{E}_p \left(\max_{i \in U_N} |d_i|\right) \frac{1}{ N(\min_{i \in U_N} \pi_i)^2} 4M^2 \\
&= O(n^{-1}) +O(n^{-1}) O( n_{ }^{1/2}\gamma(n_{ })^{-1}k(n_{ })^{-1})\\
& =  O(n^{-1}) + o(n^{-1})
\end{align*}
by conditions (\ref{cond1}), (\ref{pimin}),  (\ref{cond4}) and (\ref{cond5}) and rate condition (\ref{rate3}). Similarly, the second term becomes
\begin{align*}
A_{\sN{N}2} & =  \mbox{E}_p\left\{ \mbox{E}_p \left[ \underset{i\neq j: i, j \in U_{\sN{N}}}{\sum \sum} \frac{ \left(y_i- \tilde{h}_{\sN{N}}(\bm{x}_i)\right) }{N}\frac{ \left(y_j- \tilde{h}_{\sN{N}}(\bm{x}_j)\right) }{N}\left(\frac{I_i}{\pi_i}-1\right) \left(\frac{I_j}{\pi_j}-1\right) \Big| Q_n \right] \right\} \\
& =  \mbox{E}_p\left\{  \underset{i\neq j: i, j \in U_{\sN{N}}}{\sum \sum} \frac{ \left(y_i- \tilde{h}_{\sN{N}}(\bm{x}_i)\right) }{N}\frac{ \left(y_j- \tilde{h}_{\sN{N}}(\bm{x}_j)\right) }{N} \mbox{E}_p \left[\left(\frac{I_i}{\pi_i}-1\right) \left(\frac{I_j}{\pi_j}-1\right) \Big| Q_n \right] \right\}\\
& = \mbox{E}_p\left\{  \underset{i\neq j: i, j \in U_{\sN{N}}}{\sum \sum} \frac{ \left(y_i- \tilde{h}_{\sN{N}}(\bm{x}_i)\right) }{N}\frac{ \left(y_j- \tilde{h}_{\sN{N}}(\bm{x}_j)\right) }{N}  \left[ \frac{(\pi_{ij} - \pi_i \pi_j) + d_{ij} - \pi_i d_j - \pi_j d_i}{\pi_i \pi_j} \right] \right\}\\
& \leq \left[\max_{i, j \in U_{N} :i \neq j} \left|\frac{ \pi_{  ij}}{\pi_{  i}\pi_{  j}} - 1 \right|  + \mbox{E}_p \left(\max_{i, j \in U_{N} :i \neq j} \left|\frac{ d_{  ij}}{\pi_{  i}\pi_{  j}} \right| \right) + 2 \mbox{E}_p \left(\max_{i \in U_N} |d_i|\right) \frac{1}{ \min_{i \in U_N} \pi_i} \right]  4M^2\\
& = O(n^{-1}) + O(n k(n)^{-2}) + O( n_{ }^{1/2}\gamma(n_{ })^{-1}k(n_{ })^{-1})\\
& = o(\gamma(n)^{-2})
\end{align*}
by conditions (\ref{cond1}) -- (\ref{cond5}) and rate condition (\ref{rate3}).

\noindent By rewriting the mean functions and applying the Cauchy Schwarz Inequality to $B_{\sN{N}}$, we obtain
\begin{align*}
B_{\sN{N}} &= \mbox{E}_p \left|\sum_{k=1}^q ( \tilde{\mu}_{\sN{N}k} - \tilde{\mu}_{nk} ) \sum_{i\in U_{\sN{N}}} \frac{ I(\bm{x}_i \in B_{nk})}{N}\left(\frac{I_i}{\pi_i}-1\right) \right| \\
&\leq \left\{ \sum_{k=1}^q \mbox{E}_p\left[( \tilde{\mu}_{\sN{N}k} - \tilde{\mu}_{nk} )^2 \right]\right\}^{1/2}  \left\{ \sum_{k=1}^q   \mbox{E}_p \left[\underset{i, j\in U_{\sN{N}}}{ \sum\sum} \frac{ I(\bm{x}_i \in B_{nk})}{N} \frac{ I(\bm{x}_j \in B_{nk})}{N}\left(\frac{I_i}{\pi_i}-1\right) \left(\frac{I_j}{\pi_j}-1\right) \right] \right\}^{1/2}\\
& \leq \left\{ \frac{n}{k(n)} O\left(\frac{ n_{ }^{1/2}\gamma(n_{ })}{k(n)} \right) \right\}^{1/2}  \left\{  \mbox{E}_p \left[\underset{i, j\in U_{\sN{N}}}{ \sum\sum} \frac{1}{N^2}\left(\frac{I_i}{\pi_i}-1\right) \left(\frac{I_j}{\pi_j}-1\right) \sum_{k=1}^q I(\bm{x}_i \in B_{nk})  I(\bm{x}_j \in B_{nk}) \right] \right\}^{1/2}\\
& \leq \left\{ O\left(\frac{ n_{ }^{3/2}\gamma(n_{ })}{k(n)^{2}} \right) \right\}^{1/2}   \left\{  \mbox{E}_p \left[\underset{i, j\in U_{\sN{N}}}{ \sum\sum} \frac{1}{N^2}\left(\frac{I_i}{\pi_i}-1\right) \left(\frac{I_j}{\pi_j}-1\right)  \right] \right\}^{1/2}\\
& \leq \left\{ O\left(\frac{ n_{ }^{3/2}\gamma(n_{ })}{k(n)^{2}} \right) \right\}^{1/2}   O(n^{-1/2})\\
& = O\left(\left[ \frac{1}{k(n)} \frac{ n_{ }^{1/2}\gamma(n_{ })}{k(n)}\right]^{1/2}   \right)\\
& = o(1)
\end{align*}
by Proposition 1 in \citet*{tot11} and by Conditions (\ref{pimin}) and (\ref{cond3}).

\end{proof}


 \bibliography{RegTreeEst}

\end{document}